\newlength{\oldparindent}
\newcommand{\superimpose}[2]{%
	{\ooalign{$#1\@firstoftwo#2$\cr\hfil$#1\@secondoftwo#2$\hfil\cr}}}
\newcommand{\rr}{{\mathbb{R}}}
\newcommand{\rrD}{{\rr^D}}
\newcommand{\ee}{{\mathbb{E}}}
\newcommand{\pp}{{\mathbb{P}}}
\newcommand{\qq}{{\mathbb{Q}}}
\newcommand{\fff}{{\mathscr{F}}}
\newcommand{\ffft}{{\mathscr{F}_t}}
\def\@chapter[#1]#2{\ifnum \c@secnumdepth >\m@ne
	\refstepcounter{chapter}%
	\typeout{\@chapapp\space\thechapter.}%
	\addcontentsline{toc}{chapter}%
	{\protect\numberline{\thechapter}\string\hypertarget{chap\thechapter}{#1}}%
	\else
	\addcontentsline{toc}{chapter}{#1}%
	\fi
	\chaptermark{#1}%
	\addtocontents{lof}{\protect\addvspace{10\p@}}%
	\addtocontents{lot}{\protect\addvspace{10\p@}}%
	\if@twocolumn
	\@topnewpage[\@makechapterhead{#2}]%
	\else
	\@makechapterhead{#2}%
	\@afterheading
	\fi}
\def\@makechapterhead#1{%
	\vspace*{50\p@}%
	{\parindent \z@ \raggedright \normalfont
		\ifnum \c@secnumdepth >\m@ne
		\huge\bfseries \@chapapp\space \thechapter
		\par\nobreak
		\vskip 20\p@
		\fi
		\interlinepenalty\@M
		\Huge \bfseries \hyperlink{chap\thechapter}{#1}\par\nobreak
		\vskip 40\p@
	}}
\newtheorem{ass}{Assumption}
\newtheorem{cor}{Corollary}[section]
\newtheorem{thrm}{Theorem}[section]
\title{\Large \bf Optimal Stochastic Decensoring and Applications to Calibration of Market Models} %
\author{\textsc{Anastasis Kratsios}
		\thanks{Department of Mathematics and Statistics, Concordia University, 1455 Boulevard de Maisonneuve Ouest, Montr\'{e}al,
			Qu\'{e}bec, Canada H3G 1M8.
            Corresponding Author: email: anastasis.kratsios@mail.concordia.ca}
		\textsc{
	}\\
}
\date{
\today \\ %
}
\newcommand{\uu}{\mathbb{U}}
\newcommand{\lus}{\mathscr{L}_s^u}
	\newcommand{\lusrev}{\mathscr{L}_{1-s}^u}
\newcommand{\fty}[1]{
		\ensuremath{\mathscr{F}_t^{#1}}
	}
\begin{document}
	\maketitle
\begin{abstract}
	Typically flat filling, linear or polynomial interpolation methods to generate missing historical data.  We introduce a novel optimal method for recreating data generated by a diffusion process.  
	The results are then applied to recreate historical data for stocks.   
\end{abstract}
\textit{Keywords:} Optimal Decensoring, Stochastic Filtering, Backfilling, Time-Reversal, Conditional Stochastic Differential Equations. \hfill\\
\section{Introduction}
Many calibration method typically require the use of a greate amount of data than is available.  Often times this is delt with by bootstrapping the data using a variety of different techniques.  Here, we present and solve the optimal way to re-create historical data in a way which is consitent with the sparse prices observed in the market.   

Let $\beta_t$ be an $\rrD$-valued diffusion process, modelling a set of $D$-observable \textit{benchmarks} for which there is no missing data.  Let $\eta_t$ be a diffusion process modelling the true trajectory of the price of the asset we are interested in decensoring.  Moreover let $N_t$ be a poisson point process with intensity $\lambda$ capturing the times which $\eta_t$ is observed before time $T$, after which there are no missing data issued.  We model the observed prices of the censored process as the compound Poisson process $Y_t$ given by
\begin{equation}
\begin{aligned}
Y_t\triangleq & I_{t< T}\int_0^t \eta_{s-}dN_s + I_{t\geq T} \eta_t.  \\
\eta_t =& \eta_0 +\int_0^t h(t,\beta_t,X_t)dt + K(s) dW_t^{\eta}.
\end{aligned}
\label{eq_censorisky}
\end{equation}
where $W_t^{\eta}
$
is a Brownian motion 
and $h,K$ are deterministic functions of appropriate dimension.  

We assume that the observer is a time \begin{equation}
t\geq T_0> T.
\end{equation}
Using the benchmarks and the current time we wish to find the optimal estimate of $X_t$ and therefore of $Y_t$ and $\eta_t$ given our current observations.  

Our solution is taken in two steps.  We first solve the problem when $N_t$ does not jump; this represents the case where no obsrevations are made prior to $T_0$ such as in the case of an initial product offering.  In the second step we assume that $N_t$ has made jumps prior to time $T_0$.  Note that, since the observer's vantage point is after this time, these price times are not stochastic.  However the issue of only picking paths which match up with the realised price data $\left(T_i,\eta_{T_i}\right)$, where $T_i$ are the times prior to $T_0$ when $N_t$ jumps, presents a new issue.  This is solved by truning to conditional stochastic differential equations.  This theory allows us to minimally purturb the real-world measure until all the points are hit when backfilling the data.  

We begin by reviewing some of that relevant theory.  
\subsection{Stochastic Filtering}
A staple technique in mathematical finance for forecasting the future movements of a portfolio of assets is stochastic filtering.  The stochastic filtering problem aims to estimate the conditional density $\pi_t$ of an unobservable signal process $X_t$ given the information of an observable process $Y_t$.  Specifically, since 
$$
\operatorname{argmin}_{Z \in \fty{Y}}\ee\left[
\left(
Z-f(X_t)
\right)^2
\right]
= \ee \left[
f(X_t)|\fty{Y}
\right]
,
$$
where $\fty{Y}\triangleq \sigma\left\{Y_s:0\leq s\leq t \right\}$ is the $\sigma$-algebra generated by the observation process $Y_t$ up until time $t$ Proposition 2.3.3 \cite{van2007stochastic}.  

Formally, assume that the signal process $X_t$ is an $n$-dimension diffusion process following dynamics of the form
\begin{equation}
X_t = X_0 + \int_0^t b(s,X_s,u_s)ds + \int_0^t\sigma(s,X_s,u_s)dW_s
,
\label{eq_stoch_bakcfill_signal}
\end{equation}
where $W_t$ is an $m$-dimension Wiener process.  Moreover assume that the observation process $Y_t$ is a diffusion process following the dynamics
\begin{equation}
\eta_t = \int_0^t h(s,X_s,u_s)ds + \int_0^tK(s)dW_s^{\eta}
,
\label{eq_stoch_bakcfill_observ}
\end{equation}
where $W_t^{\eta}$ is a $p$-dimensional Brownian motion and that, $u_t$ is a control with domain $\uu$ a Borel subset of $\rr^q$, $b:[0,\infty)\times \rr^n\times \uu \rightarrow \rr^n$, $\sigma:[0,\infty)\times \rr^n\times \uu \rightarrow \rr^{n\times m}$, $h: [0,\infty)\times \rr^n\times \uu \rightarrow \rr^q$, and $K:[0,\infty)\rightarrow \rr^{p\times p}$ are measurable maps and the control $u_t$ is adapted to the filtration $\fty{Y}$.  
It was shown that the optimal estimates $\pi_t(f)\triangleq \ee \left[
f(X_t)|\fty{Y}
\right]$ according to the mean-squared error of the signal's density must follow the dynamics
\begin{equation}
\pi_t(f)=\pi_0(f) + \int_0^t \pi_s(\lus f) ds + \int_0^t \left[
K(s)^{-1}\left(
\pi_s(h_s^u(f)
-
\pi_s(f)\pi_s(h_s^u)
\right)
\right]^{\star}d\bar{B}_s
,
\end{equation}
where $\star$ denotes the transpose operator and $\bar{B}_t$ is an $\fty{Y}$-Brownian motion called the innovations process defined by
\begin{equation}
\begin{aligned}
\bar{B}_t \triangleq &
\bar{Y}_t -
\int_0^t \pi_s\left(
K(s)^{-1}h_s^u
\right)ds \\
\bar{Y}_t \triangleq & \int_0^t K(s)^{-1}dY_s
,
\end{aligned}
\label{eq_innovations_process}
\end{equation}
and where $\lus$ is the infinitesimal generator of the diffusion $X_t$ given by
\begin{equation}
\begin{aligned}
\lus f(t,x) \triangleq &
\frac{\partial f}{\partial t} (t,x)
+ 
\sum_{i=1}^n b^i(t,x,u)\frac{\partial f}{\partial x_i} (t,x) \\
+ &
\frac{1}{2} \sum_{i,j=1}^n\sum_{k=1}^m \sigma(t,x,u)^{ik}\sigma(t,x,u)^{jk} 
\frac{\partial^2 f}{\partial x_i\partial x_j} (t,x)
, 
\end{aligned}
\label{eq_infinitesimal_gen_diff}
\end{equation}
(see Proposition 7.2.9 \cite{van2007stochastic} for more details).  

\subsection{Conditional Stochastic Differential Equations}
A classical problem in probability theory is the construction of a Brownian bridge.  Here a given Brownian motion is conditioned to hit a prespecified point at a predetermined time.  This general procedure is developed in \cite{baudoin2002conditioned}, where for a fixed time-horizon $T$ an $\fff_T$-measurable $\rr$-valued random variable $Y$ conditioning the trajectories of $X_t$ and a probability measure $\nu$ describing the probability for a conditioned trajectory to happen.  The triple $(T,Y,\nu)$ is called a conditioning on the corresponding Wiener space and exists under the following assumptions.
\begin{ass}\label{ass_2}\hfill
	\begin{itemize}
		\item For $0\leq t<T$, the law of $B_t$ under $\pp$ is absolutely continuous with respect to the law of 
		$$
		\pp|_{\fff_t}\otimes \pp_Y,
		$$
		when $(\Omega,\ffft,\pp)$ is endowed with the filtration 
		$$
		\bar{\fff}_t\triangleq \fff_t \otimes B(\rr),
		$$
		where $B(\rr)$ is the Borel $\sigma$-algebra on $\rr$ and $\pp_Y$,
	\item $Supp(\nu)\subseteq Supp(\pp)$,
	\item $L^1_{\pp_Y}(\rr)\subseteq L^1_{\nu}(\rr)$.  
	\end{itemize}
\end{ass}
Under these assumptions it is shown in \citep[Proposition 2.1]{baudoin2002conditioned} that there exists a unique probability measure making the law of $Y$ into $\nu$ and respecting conditional expectation on bounded random-variables.  

Moreover, it can be seen in \citep[Proposition 2.2]{baudoin2002conditioned} that this measure minimally deviates from the reference measure $\pp$ in the sense that amongst all the absolutely continuous measure with respect to $\pp$, it has both the least entropy and its Radon-Nikodym derivative has the least variance.  
\section{Optimal Stochastic Decensoring}
The unconditional problem is first solved.  Subsequently we address the full-decensoring problem.  
%
\subsection{Optimal Backfilling}
In principal we should have enough data to calibrate $\pi_t(f)$ on some time interval $[1,1+\epsilon]$ for some $\epsilon>0$.  Once we have obtained the dynamics of $\pi_t(f)$ we can then reverse time and backfill the missing data using the dynamics of $\pi_{1-t}(f)$ described by the following equations.  
\begin{ass}[Filtering Assumptions]\label{ass_reg_con_backwardstime}\hfill
\begin{itemize}
\item There exists a constant $K$ such that for every $x,y \in \rr^n$
\begin{equation}
\begin{aligned}
\operatorname{sup}_{t,u} &
\left[
\left|
\sigma(t,x,u)-\sigma(t,y,u)
\right|
+ \left|
b(t,x,u)-b(t,y,u)
\right|
\right]\leq K|x-y|\\
\operatorname{sup}_{t,u} &
\left[
\left|
\sigma(t,x,u)-\sigma(t,y,u)
\right|
\right]\leq K(1+|x|)
\end{aligned}
\end{equation}
\item For any $t >0$, $X_t$ has a density $p_t$.  
\item For every $i$ and any bounded open subset $U$ of $\rr^n$ and any $t_0>0$
$$
\int_{t_0}^1 \int_{U} \left|
\nabla_j \left[
	\sigma^i(t,x,u)\sigma^j(t,x,u)p_t(x)
\right]
\right|dxdt <\infty
.
$$
\item $(X_t,Y_t)$ has a unique $\fty{Y}$-adapted solution.
\item $K(t)$ is invertible for every $t$.
\item $K(t)$ and $K(t)^{-1}$ are locally bounded.
\item $b,\sigma,h$ are bounded functions.  
\end{itemize}
\end{ass}
\begin{ass}[Time-Reversal Assumptions]\label{ass_4_TR}
	The sums of the distributional derivatives $$
	\sum_{i=1}^{d}\left(\sigma(t,x)\sigma(t,x)^{\star}\right)P(t,x),
	$$
	are locally integrable.  
\end{ass}
The backfilling problem can be formalized as seeking the dynamics for the time-reversed conditional expectation, given the current information.  That is, we seek dynamics for the optional projection of the process $b_t(f)$ defined by
$$
b_{1-t}(f)\triangleq \ee\left[
f(X_t)\mid \fff_t^{Y}
\right] = \operatorname{argmin}_{Z \in L^2(\fff_t^Y)}\ee\left[
\left(
Z - f(X_t)
\right)^2
\right].  
$$
The solution to this problem is now presented.  
\begin{thrm}[Optimal Backfilling Equations]\label{thrm_stoch_backfilling_equations}
	Assume that $X_t$ and $Y_t$ are as described in equations \eqref{eq_stoch_bakcfill_signal} and \eqref{eq_stoch_bakcfill_observ} respectively and that the assumptions \autoref{ass_reg_con_backwardstime} and \autoref{ass_4_TR} hold.  Then the dynamics of $\pi_{1-t}(f)$ for any $f \in C_b^{1,2}(\rr,\rr^n)$ are given by
	\begin{equation}
	\begin{aligned}
	db_{t}(f)=d\pi_{1-t}(f) = & \int_{t_0}^t \left[
	K(1-s)^{-1}\left(
	\pi_{1-s}(h_{1-s}^u(f))-\pi_{1-s}(f)\pi_{1-s}(h_{1-s}^u)
	\right)
	\right]^{\star}dB_s
	\\
	+\int_{t_0}^t &\left[
	-\pi_{1-s}\left(
	\lusrev f
	\right)dt
	+ 
	\frac{
		\nabla \left[
		K(s)^{-1}\left(
		\pi_s(h_s^u(f)
		-
		\pi_s(f)\pi_s(h_s^u)
		\right)
		\right]^{\star}
	}{
		p_t(x)
	}
	\right]dt
	\end{aligned}
	\label{eq_backfilling_dynamics_1}
	,
	\end{equation}
	where $p_t$ is the density of $\pi_t(f)$ and satisfies the SPDE
	\begin{equation}
	\begin{aligned}
	p_t = & p_0(x)+ \int_0^t \left(\mathscr{L}_t^u\right)^{\star}p_t(x)dt \\
	+ &
	p_t(x)\left[
	K(t)^{-1}\left(
	h(t,x,u_t) - \pi_t(h_t^u)
	\right)
	\right]^{\star}
	\left[
	d\bar{Y}_t - \pi_t\left(
	K(t)^{-1}h_t^u
	\right)
	\right]
	.
	\end{aligned}
	\end{equation}  
\end{thrm}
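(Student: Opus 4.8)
The plan is to combine two classical ingredients: the Kushner–Stratonovich (Fujisaki–Kallianpur–Kunita) filtering equation recalled in the introduction, and the Haussmann–Pardoux time-reversal theorem for diffusions. First I would work on the forward interval $[t_0,1]$ where the filter is assumed to be calibrated: by the cited Proposition 7.2.9 of \cite{van2007stochastic}, under \autoref{ass_reg_con_backwardstime} the process $\pi_t(f)=\ee[f(X_t)\mid\fty{Y}]$ satisfies the Kushner–Stratonovich SDE, and by the usual Zakai/Kushner duality its unnormalized density $p_t$ satisfies the quoted SPDE driven by $(\mathscr{L}_t^u)^\star$ and the innovations increments $d\bar Y_t-\pi_t(K(t)^{-1}h_t^u)\,dt$. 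This gives the second displayed equation essentially for free once one checks that the coefficient boundedness and the density hypothesis in \autoref{ass_reg_con_backwardstime} put us in the scope of that proposition.

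Next I would reverse time. Set $\tilde\pi_s(f)\triangleq\pi_{1-s}(f)$ on $s\in[1-1,1-t_0]$; I view $(\pi_t(f),p_t)$ as a (degenerate) diffusion in its own filtration and apply the Haussmann–Pardoux time-reversal theorem, whose hypotheses are exactly the local-integrability conditions bundled into \autoref{ass_reg_con_backwardstime} (the $\int_{t_0}^1\int_U|\nabla_j[\sigma^i\sigma^j p_t]|\,dx\,dt<\infty$ bullet) and \autoref{ass_4_TR} (local integrability of $\sum_i(\sigma\sigma^\star)P(t,x)$). The theorem tells us that the time-reversed process is again an Itô process whose drift is the original drift with sign reversed plus the logarithmic-derivative correction term $p_t^{-1}\nabla[\sigma\sigma^\star p_t]$; applied to the Kushner–Stratonovich dynamics this produces exactly the $-\pi_{1-s}(\mathscr{L}^u_{1-s}f)\,dt$ term together with the $\nabla[\cdots]/p_t(x)$ correction, while the martingale part re-expresses against a new Brownian motion $B_s$ (the reversed innovations process). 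Assembling these gives \eqref{eq_backfilling_dynamics_1}. Finally I would note that $\pi_{1-t}(f)$ is by construction the $L^2(\fff_t^Y)$-projection of $f(X_t)$, which is the asserted optional-projection/backfilling interpretation, using the elementary variational identity for conditional expectation quoted in the filtering subsection.

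The main obstacle is the rigorous application of time reversal: Haussmann–Pardoux is stated for finite-dimensional diffusions with a genuine density, whereas $\pi_t(f)$ is a measure-valued (or, after fixing $f$, real-valued but highly degenerate) process whose diffusion matrix $K(s)^{-1}(\pi_s(h^u_s(f))-\pi_s(f)\pi_s(h^u_s))$ can be rank-deficient and need not have the smoothness the theorem wants. I would handle this by interpreting the reversal at the level of the density SPDE for $p_t$ rather than the filter itself — i.e. reverse the Fokker–Planck/Zakai operator $(\mathscr{L}^u)^\star$, for which \autoref{ass_4_TR} is precisely the Haussmann–Pardoux integrability condition — and then transport the reversal back to $\pi_{1-t}(f)$ via the normalization $\pi_t(f)=\int f\,p_t/\int p_t$, being careful about the joint regularity in $(t,x)$ needed to differentiate under the integral. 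A secondary technical point is checking that the reversed driving process $B_s$ is indeed an $\fff_t^Y$-Brownian motion, which follows from Lévy's characterization once its quadratic variation is computed from the reversed semimartingale decomposition; I expect this to be routine given the invertibility and local boundedness of $K$ assumed in \autoref{ass_reg_con_backwardstime}.
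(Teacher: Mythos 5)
Your proposal follows essentially the same skeleton as the paper's proof: obtain the forward Kushner--Stratonovich dynamics of $\pi_t(f)$ from Proposition 7.2.9 of \cite{van2007stochastic}, then invoke a diffusion time-reversal theorem under the integrability hypotheses of \autoref{ass_reg_con_backwardstime} and \autoref{ass_4_TR} to read off the reversed drift (sign-flipped drift plus the $\nabla[\cdot]/p_t$ correction) and re-express the martingale part against a reversed Brownian motion; the paper's reference for the reversal is Theorem 2.3 of \cite{millet1989integration}, which belongs to the same circle of results as the Haussmann--Pardoux theorem you cite, so that choice is immaterial. Two points differ, both in your favour. First, the paper derives the density equation by appealing to Malliavin differentiability (Theorem 2.1 of the same reference) and integrating $\pi_t(f)=\int f(x)p_t(x)dx$ by parts, whereas you take it directly as the Kushner--Stratonovich density SPDE obtained from the filtering theory; your route is the more standard one and lands on the same equation. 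Second, you explicitly flag the degeneracy obstruction --- the reversal theorem is stated for finite-dimensional diffusions with a genuine density, while the filter's diffusion coefficient $K(s)^{-1}\left(\pi_s(h_s^u(f))-\pi_s(f)\pi_s(h_s^u)\right)$ may be rank-deficient and the hypotheses in \autoref{ass_reg_con_backwardstime} and \autoref{ass_4_TR} concern the signal coefficients $b,\sigma$ and the signal density rather than the filter process itself --- and you propose to perform the reversal at the level of the Fokker--Planck/density equation and transport it back through the normalization. The paper does not address this issue at all: it applies the time-reversal theorem directly to $\pi_{1-t}(f)$ without verifying its hypotheses for the filter, so your version is at least as complete as the paper's argument, and the remaining work (justifying the density-level reversal, differentiation under the integral, and the identification of the reversed innovations as an $\fty{Y}$-Brownian motion) is a gap common to both.
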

\begin{proof}
	Proposition \cite[7.2.9]{van2007stochastic} implies that the dynamics for $\pi_t(f)$ are given by
	\begin{equation}
	\pi_t(f)=\pi_0(f) + \int_0^t \pi_s(\lus f) ds + \int_0^t \left[
	K(s)^{-1}\left(
	\pi_s(h_s^u(f)
	-
	\pi_s(f)\pi_s(h_s^u)
	\right)
	\right]^{\star}d\bar{B}_s
	\label{eq_restaete_pidynamics}
	.
	\end{equation}
	Under the filtering assumptions \autoref{ass_reg_con_backwardstime}, Theorem \cite[2.3]{millet1989integration}
	$\pi_{1-t}(f)$ is a diffusion process whose infinitesimal generator $\bar{L}_t^u$ is given by
	\begin{equation}
	\begin{aligned}
	\bar{L}_t^u = &\left[
	K(1-s)^{-1}\left(
	\pi_{1-s}(h_{1-s}^u(f))-\pi_{1-s}(f)\pi_{1-s}(h_{1-s}^u)
	\right)
	\right]^{\star}\Delta
	\\
	+ &\left[
	-\pi_{1-s}\left(
	\lusrev f
	\right)dt
	+ 
	\frac{
		\nabla \left[
		K(s)^{-1}\left(
		\pi_s(h_s^u(f)
		-
		\pi_s(f)\pi_s(h_s^u)
		\right)
		\right]^{\star}
	}{
		p_t(x)
	}
	\right]\nabla
	\end{aligned}
	\label{eq_infinitesimal_gen_reverse_timeio}.  
	\end{equation}
	By theorem 2.1 \cite{millet1989integration} the process $\pi_t(f)$ is Malliavin differentiable and so we may integrate $\pi_t(f)= \int f(x)p_t(x)dx$ by parts to obtain the dynamics for $dp_t(x)$ by
	\begin{equation}
	\begin{aligned}
	p_t(x) = & p_0(x)+ \int_0^t \left(\mathscr{L}_t^u\right)^{\star}p_t(x)dt \\
	+ &
	p_t(x)\left[
	K(t)^{-1}\left(
	h(t,x,u_t) - \pi_t(h_t^u)
	\right)
	\right]^{\star}
	\left[
	d\bar{Y}_t - \pi_t\left(
	K(t)^{-1}h_t^u
	\right)
	\right]dt
	.
	\end{aligned}
	\end{equation}
	Lastly, note that time begins from $t_0$ and not $0$.  Making the linear time-change $t\mapsto t+t_0$, gives the result.  
\end{proof}
In practice the added assumption of linear dynamics 
\begin{equation}
\begin{aligned}
dX_y = & A(t)X_tdt + D(t)u_tdt +C(t)dW_t \\
dY_t = & H(t)X_tdt + K(t)dB_t
\end{aligned}
\label{eq_Klamnalinear_dynamics},
\end{equation}
where $A,C,D,H$ and $K$ are matrix-valued functions of dimension $n\times n$, $n\times m$, $n\times k$, $p \times n$ and $p \times p$, respectively and $u_t$ is a $k$-dimensional control adapted to $\fty{Y}$, is common due to its computational speed of such models and simplicity of implementation.  Under these additional assumption we may simplify the dynamics of $\pi_t(f)$ of theorem 
.
\begin{thrm}[Kalman-Bucy Backfilling]\label{cor_KB_Backfilling}
	Assume that $X_0$ is Gaussian with mean $\hat{X}_0$ and covariance $\hat{P}_0$, the solution to equation \eqref{eq_Klamnalinear_dynamics} is unique, $K(t)$ is invertible for every $t$ and \hfill\\
	$A(t),C(t),D(t),H(t),K(t),K(t)^{-1}$ are continuous.  Then the dynamics of the time-reversed conditional mean 
	$$
	\hat{X}_t \triangleq \ee\left[
	X_t | \fty{Y}
	\right] 
	$$ 
	is given by
	\begin{equation}
	\begin{aligned}
	d\hat{X}_{1-t} = & \left[
	-\left(
	A(t)\hat{X}_t + D(t)u_t
	\right)
	+ \frac{1}{\phi(\hat{X}_t,\hat{P}_t)}
	\left(
	\nabla(
	\hat{P}_t(K_t^{-1}H_t)
	)^{\star}\phi(\hat{X}_t,\hat{P}_t)
	\right)
	\right]dt\\
	+ &\hat{P}_t(K(t)^{-1}H(t))^{\star}d\bar{B}_t \\
	\frac{d\hat{P}_t}{dt} = &
	A(t)\hat{P}_t + \hat{P}_tA(t)^{\star} - \hat{P}_tH(t)^{\star}(K(t)K(t))^{\star}
	H(t)\hat{P}_t + C(t)C(t)^{\star}\\
	d\bar{B}_t = & K(t)^{-1}(dY_t - H(t)\hat{X}_t dt),
	\end{aligned}
	\label{eq_conditional_mean}
	\end{equation}
	$\bar{B}_t$ is a $\fty{Y}$-Brownian motion, and $\phi(\mu,\Sigma)$ is the density of a Gaussian random variable with mean $\mu$ and covariance $\Sigma$.  
\end{thrm}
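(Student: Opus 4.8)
The plan is to derive \autoref{cor_KB_Backfilling} as the linear--Gaussian specialization of \autoref{thrm_stoch_backfilling_equations}, the key simplification being that for the dynamics \eqref{eq_Klamnalinear_dynamics} the filter $\pi_t$ is \emph{exactly} a Gaussian law. First I would recall the classical Kalman--Bucy theorem (the linear-filtering result of \cite{van2007stochastic}): under the hypotheses of the statement --- $X_0\sim\mathcal N(\hat X_0,\hat P_0)$, uniqueness for \eqref{eq_Klamnalinear_dynamics}, $K(t)$ invertible, and $A,C,D,H,K,K^{-1}$ continuous --- the conditional law of $X_t$ given $\fty{Y}$ is $\mathcal N(\hat X_t,\hat P_t)$ with $\hat X_t$ the Kalman--Bucy estimate and $\hat P_t$ the solution of the deterministic Riccati equation appearing in \eqref{eq_conditional_mean}, and the innovations $\bar B_t=\int_0^t K(s)^{-1}(dY_s-H(s)\hat X_s\,ds)$ form an $\fty{Y}$--Brownian motion. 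Consequently the conditional density $p_t$ appearing in \autoref{thrm_stoch_backfilling_equations} is here the explicit Gaussian density $\phi(\hat X_t,\hat P_t)$.

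Second, I would check that \eqref{eq_Klamnalinear_dynamics} is covered by \autoref{thrm_stoch_backfilling_equations}. The coefficients $b(t,x,u)=A(t)x+D(t)u$, $\sigma(t,x,u)=C(t)$, $h(t,x,u)=H(t)x$ are affine in $x$ with continuous --- hence, on the horizon $[0,1]$, bounded --- coefficient matrices, so the Lipschitz and linear-growth parts of \autoref{ass_reg_con_backwardstime} hold; under a mild nondegeneracy of $\int_0^t C(s)C(s)^\star\,ds$ the unconditional law of $X_t$ is a nondegenerate Gaussian with smooth density $p_t$, all the distributional derivatives in \autoref{ass_reg_con_backwardstime} and \autoref{ass_4_TR} become genuine smooth functions with Gaussian decay and are therefore locally integrable, and Malliavin differentiability of $\pi_t$ follows at once from its explicit linear representation. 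The only place where the linear model steps outside the literal hypotheses of \autoref{thrm_stoch_backfilling_equations} is that $b$ and $h$ are not globally bounded in $x$; I would handle this by a standard localization, stopping at $\tau_R=\inf\{t:|X_t|\ge R\}$ and letting $R\to\infty$, noting that the time-reversal input from \cite{millet1989integration} only requires the smoothness and local integrability just established. Verifying this fit is the step I expect to be the main obstacle, as the remaining computations are bookkeeping.

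Third, with $\pi_t=\mathcal N(\hat X_t,\hat P_t)$ in hand I would evaluate the backfilling dynamics \eqref{eq_backfilling_dynamics_1} at $f=\mathrm{id}$, componentwise, justified by the same localization. For the identity map the generator gives $\lus f=A(s)x+D(s)u$, hence $\pi_s(\lus f)=A(s)\hat X_s+D(s)u_s$; the Kushner correction $\pi_s(h_s^u(f))-\pi_s(f)\pi_s(h_s^u)$ is exactly the conditional covariance $\operatorname{Cov}(X_s,H(s)X_s\mid\fty{Y})=\hat P_s H(s)^\star$, so $K(s)^{-1}\bigl(\pi_s(h_s^u(f))-\pi_s(f)\pi_s(h_s^u)\bigr)=K(s)^{-1}H(s)\hat P_s$, whose transpose is the diffusion coefficient $\hat P_s(K(s)^{-1}H(s))^\star$ of \eqref{eq_conditional_mean}. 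Substituting $p_t=\phi(\hat X_t,\hat P_t)$ into the gradient term of \eqref{eq_backfilling_dynamics_1} produces $\phi(\hat X_t,\hat P_t)^{-1}\nabla\bigl(\hat P_t(K_t^{-1}H_t)\bigr)^\star\phi(\hat X_t,\hat P_t)$, and reading the resulting identity backward in time from $t_0$ yields the stated equation for $d\hat X_{1-t}$. Since $\hat P_t$ is deterministic, its time reversal is simply the Riccati ODE read backward, incurring no It\^o correction, and together with the innovations identity $d\bar B_t=K(t)^{-1}(dY_t-H(t)\hat X_t\,dt)$ this assembles \eqref{eq_conditional_mean} and completes the argument.
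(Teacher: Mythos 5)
Your argument arrives at the right formula, but it takes a genuinely different route from the paper's. The paper never passes back through \autoref{thrm_stoch_backfilling_equations}: it invokes the classical Kalman--Bucy theorem (\cite[Theorem 7.3.1]{van2007stochastic}) to obtain the finite-dimensional forward filter SDE \eqref{eq_cor_mamammaman} for $(\hat X_t,\hat P_t)$, notes that $\hat X_t$ solves a linear SDE with deterministic coefficients driven by the innovations Brownian motion and is therefore Gaussian with a well-defined density, and then applies the time-reversal theorem \cite[Theorem 2.3]{millet1989integration} directly to that SDE to read off \eqref{eq_conditional_mean}. You instead verify the hypotheses of \autoref{ass_reg_con_backwardstime} for the linear model and specialize \eqref{eq_backfilling_dynamics_1} at $f=\mathrm{id}$, substituting the Gaussian conditional density. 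Your route has the merit of making explicit where each term comes from (the drift $A(t)\hat X_t+D(t)u_t$ from $\pi_s(\mathscr{L}_s^u\,\mathrm{id})$, the diffusion $\hat P_t(K(t)^{-1}H(t))^{\star}$ from the Kushner covariance term $\hat P_s H(s)^{\star}$, and the Gaussian density in the reversal correction), and it actually tracks the statement's use of $\phi(\hat X_t,\hat P_t)$ more closely than the paper's own two-line proof does; the paper's route, in exchange, avoids entirely the boundedness problem you have to fight, since neither the Kalman--Bucy theorem nor the time reversal of the explicit Gaussian filter SDE requires bounded $b,\sigma,h$ or bounded test functions. That is also the one soft spot in your write-up: \autoref{thrm_stoch_backfilling_equations} literally assumes bounded coefficients and $f\in C_b^{1,2}$, and stopping the signal at $\tau_R=\inf\{t:|X_t|\ge R\}$ is not as harmless as for plain SDEs, because truncating the signal changes the observation process and hence $\fty{Y}$, so the filter of the stopped system is not the stopped filter. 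You would need to justify that limit with some care (e.g.\ truncating $f$ and the coefficients and using Gaussian moment bounds to pass to the limit term by term), or, more simply, do what the paper does and apply \cite[Theorem 2.3]{millet1989integration} directly to \eqref{eq_cor_mamammaman}, whose hypotheses can be checked by hand from Gaussianity.
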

\begin{proof}
	If $X_t$ and $Y_t$ follow the linear dynamics of equation \eqref{eq_Klamnalinear_dynamics}, then the assumptions of theorem 
	imply that \cite[theorem 7.3.1]{van2007stochastic} the dynamics of $X_t$ are given by
	\begin{equation}
	\begin{aligned}
	d\hat{X}_{t} = & A(t)\hat{X}_{t} dt + D(t)u_tdt + \hat{P}_{t}(K(t)^{-1}H(t))^{\star} d\bar{B}_t \\
	\frac{d\hat{P}_t}{dt} = &
	A(t)\hat{P}_t + \hat{P}_tA(t)^{\star} - \hat{P}_tH(t)^{\star}(K(t)K(t))^{\star}
	H(t)\hat{P}_t + C(t)C(t)^{\star}\\
	d\bar{B}_t = & K(t)^{-1}(dY_t - H(t)\hat{X}_t dt)
	.
	\end{aligned}
	\label{eq_cor_mamammaman}
	\end{equation}
	Since $\hat{X}_t$ follows linear dynamics and $A,B,C,H,K$ are deterministic then $X_t$ follows a normal law and therefore has a well-defined density $\phi$.  Therefore theorem 2.3 of \cite{millet1989integration} applies to the dynamics of $X_t$ which yields equation 
\end{proof}
\subsection{Optimal Stochastic Decensoring}
Assume that $N_t$ is possibly greater than $0$ before time $T_0$.  
The optimal stochastic decensoring problem can be formalized as seeking the dynamics for the time-reversed conditional expectation, given the current information.  That is, we seek dynamics for the optional projection of the process $\Xi_t(f)$ defined by
\begin{equation}
\begin{aligned}
b_{1-t}(f)\triangleq& \ee_{\pp}\left[
f(X_t)\mid \fff_t^{Y}
\right] = \operatorname{argmin}_{Z \in L^2(\fff_t^Y)}\ee_{\pp}\left[
\left(
Z - f(X_t)
\right)^2
\right]\\
\Xi_t \triangleq & \frac{d\qq}{d\pp} b_t\\
\qq \triangleq & \operatorname{arginf}\left\{H\left(Q\|\pp\right) \mid
Q\ll \pp \mbox{ and } Q\left(\min\left[I\left(\xi_{T_i}=\eta_{T_i}\right]\right)=1\right)=1.  
\right\}
\end{aligned}
\label{eq_definitions_Hyper_f_1}
\end{equation}
We interpret $\Xi_t$ as the stochastic backfiller $b_t$ but under a point of view on the world given by $\qq$ conditioned an the observed data $\left(T_i,\eta_{T_i}\right)_{i=1}^N$, indeed being realized.  Therefore, this perturbation of the stochastic backfiller assures that the observations become realizations of when filling backwards in time according to $\Xi_t$.  
\begin{ass}[Hormander Conditions]\label{ass_3}\hfill
	\begin{itemize}
\item For every $x\in \rr^d$, the SDE of Equation \eqref{eq_backfilling_dynamics_1} has a unique strong solution for $b_t(f)=x$,
\item The semi-group $P_t(x,dt)$ of the SDE of Equation \eqref{eq_backfilling_dynamics_1} has smooth densities.  
	\end{itemize}
\end{ass}
\begin{thrm}[Optimal Stochastic Decensoring]\label{thrm_Optimal_Stochastic_Dencensoring_main_1}
	Assume that $X_t$ and $Y_t$ are as described in equations \eqref{eq_stoch_bakcfill_signal} and \eqref{eq_stoch_bakcfill_observ} respectively and that the the assumptions \autoref{ass_reg_con_backwardstime} and \autoref{ass_4_TR} hold.  Moreover, make assumptions \autoref{ass_2} as well as the Hormander conditions \autoref{ass_3}.  	
	
	Then there exists a unique measure $\qq\ll\pp$ satisfying \eqref{eq_definitions_Hyper_f_1}, such that the CSDE associated to the conditioning $(T_1,Z_1,\delta)$ the dynamics of $\pi_{1-t}(f)$ for any $f \in C_b^{1,2}(\rr,\rr^n)$ are given by
	\begin{equation}
	\begin{aligned}
	db_{t}(f)=d\pi_{1-t}(f) = & \int_{t_0}^t \left[
	K(1-s)^{-1}\left(
	\pi_{1-s}(h_{1-s}^u(f))-\pi_{1-s}(f)\pi_{1-s}(h_{1-s}^u)
	\right)
	\right]^{\star}dB_s
	\\
	\int_{t_0}^t &\phi(t,p_t)+
	\left[
	-\pi_{1-s}\left(
	\lusrev f
	\right)dt
	+ 
	\frac{
		\nabla \left[
		K(s)^{-1}\left(
		\pi_s(h_s^u(f)
		-
		\pi_s(f)\pi_s(h_s^u)
		\right)
		\right]^{\star}
	}{
		p_t(x)
	}
	\right]dt
	\end{aligned}
	\label{eq_backfilling_dynamics}
	,
	\end{equation}
	where $p_t$ is the density of $\pi_t(f)$ and $\phi$ and $p_t$ satisfy
	\begin{equation}
	\begin{aligned}
	p_t = & p_0(x)+ \int_{t_0}^t \left(\mathscr{L}_t^u\right)^{\star}p_t(x)dt \\
	+ &
	p_t(x)\left[
	K(t)^{-1}\left(
	h(t,x,u_t) - \pi_t(h_t^u)
	\right)
	\right]^{\star}
	\left[
	d\bar{Y}_t - \pi_t\left(
	K(t)^{-1}h_t^u
	\right)
	\right]\\
	0 = &\frac{\partial \phi}{\partial t}
	+ 
	\frac{\partial}{\partial x}\left(
			\left[
			-\pi_{1-s}\left(
			\lusrev f
			\right)dt
			+ 
			\frac{
				\nabla \left[
				K(s)^{-1}\left(
				\pi_s(h_s^u(f)
				-
				\pi_s(f)\pi_s(h_s^u)
				\right)
				\right]^{\star}
			}{
				p_t(x)
			}
			\right]
	\phi
	\right)\\
	+&
	\frac{1}{2}
		\nabla
	\left(
	\left(
		\left[
		K(1-s)^{-1}\left(
		\pi_{1-s}(h_{1-s}^u(f))-\pi_{1-s}(f)\pi_{1-s}(h_{1-s}^u)
		\right)
		\right]^{\star}
	\right)^2 \phi^2
	\right)
		\\
	+&
	\frac{1}{2}\nabla
	\left(
			\left(
	\left[
	K(1-s)^{-1}\left(
	\pi_{1-s}(h_{1-s}^u(f))-\pi_{1-s}(f)\pi_{1-s}(h_{1-s}^u)
	\right)
	\right]^{\star}
		\right)^2
	\nabla\phi
	\right)
	.
	\end{aligned}
	\end{equation}  
\end{thrm}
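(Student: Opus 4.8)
The plan is to assemble the statement from the three ingredients already available in the excerpt: the time‑reversed filtering dynamics of \autoref{thrm_stoch_backfilling_equations}, the conditioning construction of \cite{baudoin2002conditioned}, and the forward Kolmogorov (Fokker--Planck) description of a conditioned diffusion that the Hörmander hypotheses make legitimate. First I would treat the backfiller $b_t(f)=\pi_{1-t}(f)$ as a diffusion in its own right: \autoref{thrm_stoch_backfilling_equations} already supplies its dynamics \eqref{eq_backfilling_dynamics_1} and, via \eqref{eq_infinitesimal_gen_reverse_timeio}, identifies its generator $\bar L_t^u$, with diffusion coefficient $\sigma_b(s)\triangleq\left[K(1-s)^{-1}\left(\pi_{1-s}(h_{1-s}^u(f))-\pi_{1-s}(f)\pi_{1-s}(h_{1-s}^u)\right)\right]^{\star}$ and drift read off from the $dt$-bracket of \eqref{eq_backfilling_dynamics_1}, while the density $p_t$ of $\pi_t(f)$ solves the stated filtering SPDE. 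Under \autoref{ass_3} this SDE has a unique strong solution started from each point and its transition semigroup $P_t(x,\cdot)$ has smooth densities; this is exactly the input needed to apply \cite{baudoin2002conditioned} in the next step.

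Second, I would verify \autoref{ass_2} for the law of $b_t(f)$ against the conditioning triple $(T_1,Z_1,\delta)$ and, inductively, against the remaining deterministic data points $(T_i,\eta_{T_i})_{i=1}^N$. The absolute-continuity clause with respect to $\pp|_{\fff_t}\otimes\pp_Y$ follows from the smoothness of the transition densities granted by \autoref{ass_3}; the support condition $\mathrm{Supp}(\nu)\subseteq\mathrm{Supp}(\pp)$ holds because $\nu=\delta$ is concentrated at an observed value $\eta_{T_i}$, hence inside the support of the backfiller; and $L^1_{\pp_Y}(\rr)\subseteq L^1_{\nu}(\rr)$ is automatic for a Dirac $\nu$. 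Then \citep[Proposition 2.1]{baudoin2002conditioned} yields a unique $\qq\ll\pp$ turning the law of each constraint into the prescribed Dirac mass while preserving conditional expectations of bounded random variables, and \citep[Proposition 2.2]{baudoin2002conditioned} identifies this $\qq$ as the minimiser of relative entropy $H(Q\|\pp)$ among all $Q\ll\pp$ enforcing the constraints — i.e.\ precisely the $\operatorname{arginf}$ defining $\qq$ in \eqref{eq_definitions_Hyper_f_1}. Iterating the construction over the finitely many times $T_1<\dots<T_N$ (each step a conditioning on the Wiener space carrying the previous measure) produces the single measure $\qq$ of the statement.

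Third, I would transfer the dynamics to $\qq$. Passing to the conditioned measure adds to the backfiller the usual Doob $h$-transform drift correction, so its generator under $\qq$ keeps the second-order part $\tfrac12\sigma_b\sigma_b^{\star}$ and acquires a drift equal to the backfiller drift of \eqref{eq_backfilling_dynamics_1} plus the logarithmic-derivative term in the conditional density (the term the statement writes through $\phi$). Writing the formal adjoint of this generator gives the forward Kolmogorov equation for the one-dimensional marginal density $\phi(t,\cdot)$ of the conditioned process, which is exactly the displayed PDE $0=\partial_t\phi+\partial_x(\mathrm{drift}\cdot\phi)+\tfrac12\nabla(\sigma_b^2\phi^2)+\tfrac12\nabla(\sigma_b^2\nabla\phi)$ after substituting $\sigma_b$ and the drift; coupling it to the SPDE for $p_t$ from \autoref{thrm_stoch_backfilling_equations} and finally performing the linear time-shift $t\mapsto t+t_0$ exactly as in the proof of \autoref{thrm_stoch_backfilling_equations} delivers \eqref{eq_backfilling_dynamics} together with the stated coupled system for $(p_t,\phi)$.

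The main obstacle is the verification of \autoref{ass_2}, and specifically its absolute-continuity clause: the coefficients of \eqref{eq_backfilling_dynamics_1} are degenerate and non-Lipschitz — the diffusion coefficient $\sigma_b$ may vanish and the drift carries $p_t(x)$ in a denominator — so no generic hypoellipticity result applies off the shelf. This is precisely why \autoref{ass_3} is imposed; it is the minimal hypothesis under which the kernel of \eqref{eq_backfilling_dynamics_1} is smooth, hence under which both the Baudoin conditioning and the Fokker--Planck step are meaningful. Accordingly, the substantive part of a careful write-up is showing that \autoref{ass_3}, combined with the filtering assumptions \autoref{ass_reg_con_backwardstime} and the time-reversal assumption \autoref{ass_4_TR}, indeed furnishes the hypotheses of \cite{baudoin2002conditioned} and the regularity needed for the adjoint PDE, without circularity.
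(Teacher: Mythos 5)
Your proposal is correct and follows the same skeleton as the paper: both start from the backfilling dynamics of \autoref{thrm_stoch_backfilling_equations} and obtain the decensored dynamics by passing to the conditioned measure of \citep{baudoin2002conditioned}. The difference is in execution. The paper's proof is two sentences long: it considers $b_t$ under the conditioning $(T_0-T_i,Y_{T_i},\nu)$ with $\nu$ taken to be the \emph{standard Gaussian} measure on $\rr^d$ and cites \citep[Proposition 3.1]{baudoin2002conditioned} to read off the conditioned dynamics and the equation for $\phi$ in one stroke; it neither verifies \autoref{ass_2} nor discusses entropy minimality or the iteration over the $N$ observation times. You instead check \autoref{ass_2} explicitly, invoke Propositions 2.1--2.2 of \citep{baudoin2002conditioned} for existence, uniqueness and the entropy characterization of $\qq$ in \eqref{eq_definitions_Hyper_f_1}, and then rederive the conditioned dynamics via a Doob $h$-transform plus the forward Kolmogorov equation, which is in substance what Baudoin's Proposition 3.1 packages; so your route buys a justification of the variational claim about $\qq$ that the paper's citation alone does not supply. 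Note also that you condition with Dirac masses at the observed values, which is the reading consistent with the statement's triple $(T_1,Z_1,\delta)$ and with the definition of $\qq$, whereas the paper's proof uses a Gaussian $\nu$ --- an internal inconsistency of the paper rather than a flaw of yours, though with a Dirac conditioning the absolute continuity $\qq\ll\pp$ holds only strictly before the conditioning time (as for the Brownian bridge), a point worth making explicit. Finally, the obstacle you flag --- that \autoref{ass_3} must compensate for the degenerate, $p_t$-dependent coefficients of \eqref{eq_backfilling_dynamics_1} so that Baudoin's hypotheses apply --- is real and is silently assumed, not addressed, in the paper's own proof.
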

\begin{proof}
Consider the dynamics for $b_t$ obtained in theorem \autoref{thrm_stoch_backfilling_equations} under the conditioning $(T_0-T_i,Y_{T_i},\nu)$ where $\nu$ is the standard Gaussian measure on $\rr^d$.  Proposition \citep[Proposition 3.1]{baudoin2002conditioned} gives the result.  
\end{proof}

Theorem \autoref{thrm_Optimal_Stochastic_Dencensoring_main_1} may be difficult to apply directly in practice.  However, the points $(Y_{T_i})_{i=1}^N$ are $\fff_{T_0}$-measurable.  Therefore, we instead interpolating the points $(Y_{T_i})_{i=1}^N$ by the following process.  As an alternative we may turn to a relaxed, interpolation-type approach.  

\begin{cor}
	The process $b_t$ defined by
\begin{equation*}
\begin{aligned}
\hat{b}_t(f) = b_t(f) + \sum_{i=1}^{K}\left[
\frac{
	\left(Y_{\tau_{i+1}}-b_{\tau_{i+1}}(f)\right)\left(\tau_{i+1}-t\right)
+ 
	\left(Y_{\tau_{i+1}}-b_{\tau_{i}}(f)\right)\left(t-\tau_{i}\right)
}{
	\tau_{i+1}-\tau_i
}
I\left(\tau_i\leq  t \leq \tau_{i+1}\right)
\right]
\end{aligned}
\end{equation*}  
\begin{enumerate}[{i}]
\item For $-t\geq T_0$, $\hat{t}=\pi_t(f)$,
\item For every $i$ in $1,\dots,K$, $\hat{b}_{\tau_i}=Y_{T_i}$,
\item Has $\pp$-a.s. continuous paths.  
\end{enumerate}
In particular if there exists a measure $Q\ll\pp$ such that $b_t(f)\frac{dQ}{d\pp} = \hat{b}_t$, then
\begin{equation}
H\left(Q|\pp\right)\geq H\left(\qq|\pp\right)
.
\label{eq_relaxation}
\end{equation}
\end{cor}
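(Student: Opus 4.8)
The plan is to verify the three enumerated properties by direct path‑by‑path computation and then to read off \eqref{eq_relaxation} from the variational definition of $\qq$ in \eqref{eq_definitions_Hyper_f_1}. The only property with computational content is (ii), so I would start there: fixing $i$ and evaluating the defining formula for $\hat b_t(f)$ at $t=\tau_i$, the $i$‑th bracketed summand is the affine interpolant of the correction values $Y_{T_i}-b_{\tau_i}(f)$ at $\tau_i$ and $Y_{T_{i+1}}-b_{\tau_{i+1}}(f)$ at $\tau_{i+1}$, so at $t=\tau_i$ it degenerates to $Y_{T_i}-b_{\tau_i}(f)$, and the neighbouring summand indexed by $[\tau_{i-1},\tau_i]$ evaluated at $t=\tau_i$ gives the same value; hence the adjacent affine pieces agree at the knot and $\hat b_{\tau_i}(f)=b_{\tau_i}(f)+\bigl(Y_{T_i}-b_{\tau_i}(f)\bigr)=Y_{T_i}$. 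As noted just before the statement, $(Y_{T_i})_{i=1}^N$ is $\fff_{T_0}$‑measurable, so from the observer's vantage point $t\ge T_0>T$ all the quantities $Y_{\tau_{i+1}}-b_{\tau_i}(f)$ are genuine random variables and the manipulation is legitimate.

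For (i), every knot $\tau_i$ is the reversed‑time coordinate of an observation time occurring before $T<T_0$, so for $t\ge T_0$ each indicator $I(\tau_i\le t\le\tau_{i+1})$ vanishes, the interpolation sum is identically zero, and $\hat b_t(f)=b_t(f)$, which under the time re‑indexing of Theorem \autoref{thrm_stoch_backfilling_equations} is $\pi_t(f)$. For (iii), $b_t(f)$ has $\pp$‑a.s.\ continuous paths as an It\^o process — it solves the SDE \eqref{eq_backfilling_dynamics_1}, whose coefficients are finite $\pp$‑a.s.\ under Assumptions \autoref{ass_reg_con_backwardstime} and \autoref{ass_4_TR}, the denominator $p_t$ being strictly positive by the H\"ormander conditions \autoref{ass_3} — while each summand of the interpolation correction restricts on $[\tau_i,\tau_{i+1}]$ to an affine function of $t$ with $\fff_{T_0}$‑measurable coefficients, hence is pathwise continuous, and the knot values agree by the computation above; a finite sum of pathwise‑continuous processes is pathwise continuous.

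For the entropy bound, suppose $Q\ll\pp$ with $b_t(f)\,\tfrac{dQ}{d\pp}=\hat b_t(f)$. Evaluating at each $t=\tau_i$ and using (ii) gives $b_{\tau_i}(f)\,\tfrac{dQ}{d\pp}=Y_{T_i}=\eta_{T_i}$ for every $i$; comparing with the normalization $\Xi_t=\tfrac{d\qq}{d\pp}b_t$ of \eqref{eq_definitions_Hyper_f_1}, this says the $dQ/d\pp$‑tilted backfilled trajectory is pinned to the data $(T_i,\eta_{T_i})_{i=1}^N$, i.e.\ $Q$ lies in the constraint set $\{Q\ll\pp:\ Q(\min_i I(\xi_{T_i}=\eta_{T_i})=1)=1\}$ over which $\qq$ is defined as the infimum of $H(\cdot\,\|\pp)$ in \eqref{eq_definitions_Hyper_f_1}. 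By that definition $H(Q|\pp)\ge H(\qq|\pp)$, which is \eqref{eq_relaxation}.

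The only genuinely delicate point is this last step: one must argue that the pointwise identity $b_t(f)\,dQ/d\pp=\hat b_t(f)$ together with $\hat b_{\tau_i}(f)=\eta_{T_i}$ really certifies membership of $Q$ in the admissible set of \eqref{eq_definitions_Hyper_f_1} — that is, that changing measure by $dQ/d\pp$ is exactly a conditioning that drives the reversed‑time process through the observed points in the sense of \eqref{eq_definitions_Hyper_f_1}, rather than merely matching the optional projection at those instants — and to check that the conditional expectations defining $b_t(f)$ and $\Xi_t$ remain well posed once the $\fff_{T_0}$‑measurable data are substituted. Making this identification precise is the heart of the argument; verifying (i)–(iii) is bookkeeping.
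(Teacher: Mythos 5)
Your proof follows essentially the same route as the paper's: properties (i)--(iii) are checked directly from the construction of $\hat b_t$, and the inequality \eqref{eq_relaxation} is obtained from the entropy-minimality of $\qq$ over the admissible set --- the paper gets this by citing Baudoin's Proposition 2.1, which is exactly the content of the $\operatorname{arginf}$ definition \eqref{eq_definitions_Hyper_f_1} that you invoke. The admissibility step you flag as delicate (that $b_t(f)\,\frac{dQ}{d\pp}=\hat b_t$ together with $\hat b_{\tau_i}=Y_{T_i}$ places $Q$ in the constraint set) is left equally implicit in the paper's two-line proof, so your write-up is, if anything, more explicit than the original.
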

\begin{proof}
Properties $i-iii$ hold by construction of $\hat{b}_t$.  Suppose that a measure $Q\ll\pp$ satisfying $b_t(f)\frac{dQ}{d\pp} = \hat{b}_t$ exists.  Then equation \eqref{eq_relaxation} holds by \citep[Proposition 2.1]{baudoin2002conditioned}.  
\end{proof}

%
%
\section{Conclusion}
We have developed an extension of the stochastic filtering framework which is able to capture and recreate missing historical data in an optimal manner, given the presently available market data.  This method works by benchmarking an asset against a set of liquid assets during a time period when it is itself liquid, estimating the forward-filter itself and then recreating the missing data by back-propagating the filter.  

The backwards filter is perturbed by choosing the optimal measure assuring that the realized data points before the point of liquidity are attained in path simulation process.

\bibliographystyle{abbrvnat}
\bibliography{References_1}
\printindex
\end{document}